\newcommand{\be}{\begin{enumerate}}
\newcommand{\ee}{\end{enumerate}}
\newcommand{\bd}{\begin{description}}
\newcommand{\ed}{\end{description}}
\newcommand{\beq}{\begin{equation}}
\newcommand{\eeq}{\end{equation}}
\renewenvironment{proof}[1][]{\par \noindent {\bf Proof#1}.\ }{\hfill$\Box$
\par \vspace{11pt}}
\newtheorem{theorem}{Theorem}[section]
\newtheorem{lemma}[theorem]{Lemma}
\theoremstyle{definition}
\newcommand{\pf}{{\bf Proof: }}
\newcommand{\AY}[1]{{\color{blue}#1}}
\newcommand{\zeroTOi}[1]{[0,#1]}
\begin{document}
\bibliographystyle{plain}

\newcommand{\defparproblem}[4]{
  \vspace{3mm}
\noindent\fbox{
  \begin{minipage}{.95\textwidth}
  \begin{tabular*}{\textwidth}{@{\extracolsep{\fill}}lr} #1  & {\bf{Parameter:}} #3 \\ \end{tabular*}
  {\bf{Input:}} #2  \\
  {\bf{Question:}} #4
  \end{minipage}
  }
  \vspace{2mm}
}

\newcommand{\defproblem}[3]{
  \vspace{3mm}
\noindent\fbox{
  \begin{minipage}{.95\textwidth}
  \begin{tabular*}{\textwidth}{@{\extracolsep{\fill}}lr} #1  \\ \end{tabular*}
  {\bf{Input:}} #2  \\
  {\bf{Question:}} #3
  \end{minipage}
  }
  \vspace{2mm}
}

\newcommand{\defoptproblem}[3]{
  \vspace{3mm}
\noindent\fbox{
  \begin{minipage}{.95\textwidth}
  \begin{tabular*}{\textwidth}{@{\extracolsep{\fill}}lr} #1  \\ \end{tabular*}
  {\bf{Input:}} #2  \\
  {\bf{Find:}} #3
  \end{minipage}
  }
  \vspace{2mm}
}

\begin{spacing}{1.2}

\title{The directed 2-linkage problem with length constraints\thanks{Research supported by the Danish research council under grant number DFF-7014-00037B.}}

\author{J.Bang-Jensen\thanks{Department of Mathematics and Computer Science, University of Southern Denmark, Odense, Denmark (email: jbj@imada.sdu.dk).}, T. Bellitto\thanks{Department of Mathematics and Computer Science, University of Southern Denmark, Odense, Denmark (email: bellitto@imada.sdu.dk).}, 
W. Lochet\thanks{University of Bergen, Norway (email: william.lochet@gmail.com).}, 
A. Yeo\thanks{Department of Mathematics and Computer Science, University of Southern Denmark, Odense, Denmark (email: yeo@imada.sdu.dk).}}
\date{\today}
\date{}
\maketitle

\begin{abstract}
  The {\sc weak 2-linkage} problem for digraphs asks for a given digraph and vertices $s_1,s_2,t_1,t_2$ whether $D$ contains a pair of arc-disjoint paths $P_1,P_2$ such that $P_i$ is an $(s_i,t_i)$-path. This problem is  NP-complete for general digraphs  but polynomially solvable for acyclic digraphs \cite{fortuneTCS10}. Recently it was shown \cite{bercziESA17} that if $D$ is equipped with a weight function $w$ on the arcs
  which satisfies that all edges have positive weight, then there is a polynomial algorithm for  the variant of the weak-2-linkage problem when both paths have to be shortest paths in $D$. In this paper we consider the unit weight case and prove that for every pair constants $k_1,k_2$, there is a polynomial algorithm which decides whether the input digraph $D$ has a pair of arc-disjoint paths $P_1,P_2$ such that $P_i$ is an $(s_i,t_i)$-path and the length of $P_i$ is no more than $d(s_i,t_i)+k_i$, for $i=1,2$, where $d(s_i,t_i)$ denotes the length of the shortest $(s_i,t_i)$-path. We prove that, unless the exponential time hypothesis (ETH) fails, there is no polynomial algorithm for deciding the existence of a solution $P_1,P_2$ to the {\sc weak 2-linkage} problem where each  path $P_i$ has length at most $d(s_i,t_i)+ c\log^{1+\epsilon}{}n$ for some constant $c$.
  We also prove that the {\sc weak 2-linkage} problem remains NP-complete if we require one of the two paths to be a shortest path while the other path has no restriction on the length.\\
  
  \noindent{\bf Keywords: (arc)-disjoint paths, shortest disjoint paths, acyclic digraph, linkage}
  \end{abstract}
\section{Introduction}

Notation throughout this paper follows \cite{bang2018,bang2009}. We use $\zeroTOi{i}$ to denote the set $\{0,1,2, \ldots, i\}$.


Problems concerning disjoint paths with prescribed end vertices in graphs and digraphs play an important role in many combinatorial problems. Among the most important such problems  are  
the {\sc $k$-linkage} problem and the {\sc weak $k$-linkage} problem which we formulate below for digraphs.\\

\defproblem{{\sc $k$-linkage }}{A digraph
  $D=(V,A) $ and distinct vertices $s_1,s_2,\ldots{},s_k,t_1,t_2,\ldots{},t_k$}
{Does $D$ contain $k$ vertex-disjoint paths $P_1,P_2,\ldots{},P_k$ such that $P_i$ is an $(s_i,t_i)$-path for $i\in [k]$?}\\

\defproblem{{\sc weak $k$-linkage }}{A digraph
  $D=(V,A) $ and not necessarily distinct vertices $s_1,s_2,\ldots{},s_k,t_1,t_2,\ldots{},t_k$}
{Does $D$ contain $k$ arc-disjoint paths $P_1,P_2,\ldots{},P_k$ such that $P_i$ is an $(s_i,t_i)$-path for $i\in [k]$?}\\ 

It is an easy and well-known fact that the  {\sc $k$-linkage} problem and the {\sc weak $k$-linkage} problems are polynomially equivalent in the sense that one can easily reduce one to the other by a polynomial reduction see e.g. \cite[Chapter 10]{bang2009}.  

A famous and very important result by Robertson and Seymour \cite{robertsonJCT63}  shows that the corresponding linkage problems for undirected graphs are  polynomially solvable for fixed $k$ and that the problems are  in fact FPT, meaning that there is an algorithm for each  problem whose running time is of the form $O(f(k)n^c)$ for some computable function $f$ and a constant $c$. It was shown in \cite{robertsonJCT63} that $c=3$ will do and this has been improved to $c=2$ in \cite{kawarabayashiJCT102}.

For directed graphs the situation is quite different: Fortune, Hopcroft and Wyllie \cite{fortuneTCS10} proved that already the {\sc 2-linkage} and the {\sc weak 2-linkage} problem are  NP-complete. They also showed that if the input  is an acyclic digraph, then both linkage problems are polynomially solvable when the number of terminals is fixed (not part of the input).

\begin{theorem}\cite{fortuneTCS10}
  \label{acycliclink}
  The weak {\sc $k$-linkage} problem in acyclic digrpahs is solvable in time $O(k!n^{k+2})$.
\end{theorem}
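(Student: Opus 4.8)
The plan is to solve the problem by a \emph{product} (or \emph{configuration}) digraph construction that exploits acyclicity. Fix a topological order $v_1 \prec v_2 \prec \cdots \prec v_n$ of $D$, so that every arc points forward. I would build an auxiliary digraph $\mathcal{D}$ whose vertices are the $k$-tuples $(x_1,\dots,x_k)$ recording the current head of each of the $k$ partially-built paths; there are at most $n^k$ such configurations. The start configuration is $(s_1,\dots,s_k)$ and the target is $(t_1,\dots,t_k)$, and an arc of $\mathcal{D}$ will correspond to advancing the heads out of the topologically smallest active vertex. The whole computation then reduces to a single reachability test from the start configuration to the target configuration in $\mathcal{D}$.

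The crucial idea is to advance the heads \emph{in topological order}. At a configuration, let $u$ be the smallest vertex (in $\prec$) occupied by some head that has not yet reached its target; I would resolve all heads sitting at $u$ at once, sending each of them out of $u$ along a distinct out-arc (a head whose target is $u$ simply stops there). Choosing distinct arcs out of $u$ for the departing heads is what enforces arc-disjointness, and here acyclicity does the real work: since $u$ is minimal and all arcs point forward, no head will ever return to $u$, so the arcs leaving $u$ are consumed exactly once, during the single step that processes $u$. Consequently, ensuring that the departures from each individual vertex use distinct arcs is equivalent to global arc-disjointness of the $k$ paths. (For the vertex-disjoint $k$-linkage one instead restricts the configurations to tuples of distinct vertices; the weak and strong versions are in any case polynomially interchangeable, as noted above.)

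With this in hand I would prove the key equivalence: $D$ contains the desired arc-disjoint $(s_i,t_i)$-paths if and only if the target configuration is reachable from the start configuration in $\mathcal{D}$. The ``only if'' direction simulates a given solution, processing its arcs in topological order of their tails; the ``if'' direction reads the $k$ paths off a directed walk in $\mathcal{D}$ and uses the per-vertex distinctness to conclude arc-disjointness. Because $\mathcal{D}$ inherits acyclicity from $D$ (every transition strictly advances the minimal active head forward in $\prec$), reachability can be decided in time linear in the size of $\mathcal{D}$. Tallying the at most $n^k$ configurations, the cost of generating the transitions out of each configuration (which is where the assignment of the heads at $u$ to distinct out-arcs, bounded by $k!$, enters), and the final reachability sweep gives the claimed $O(k!\,n^{k+2})$ bound.

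The main obstacle I anticipate is the correctness proof rather than the construction: one must argue rigorously that ``locally distinct'' departures from each vertex are both necessary and sufficient for a genuine family of arc-disjoint paths, and that restricting attention to the minimal active vertex loses no solution. Both points hinge entirely on acyclicity — the fact that an arc $uv$ can only ever be used while the sweep is positioned at $u$, and that $u$ is visited only once — so the argument would break for general digraphs, consistent with the NP-completeness in the cyclic case.
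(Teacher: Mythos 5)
The paper does not prove this theorem---it is quoted from Fortune, Hopcroft and Wyllie \cite{fortuneTCS10}---so there is no internal proof to compare against; your proposal is a correct reconstruction of their standard pebbling/configuration-digraph argument for acyclic digraphs. The two correctness points you single out do hold for exactly the reasons you indicate: every vertex preceding $u$ on a path into $u$ is topologically smaller than $u$, so when $u$ first becomes the minimal active position all heads that still have to traverse $u$ are already sitting at $u$, and since the minimal active position strictly increases after each step, $u$ is processed exactly once, making per-vertex distinctness of the departing arcs equivalent to global arc-disjointness. Your complexity accounting is loose (a configuration with $p$ heads at a vertex of out-degree $d$ has $d(d-1)\cdots(d-p+1)$ successors, not $k!$ of them), but summing over configurations still gives $O(2^k n^{k+1})$ arcs in the auxiliary digraph, which is within the claimed $O(k!\,n^{k+2})$ bound.
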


Eilam-Tzoref \cite{eilamDAM85} proved that for undirected graphs the {\sc 2-linkage} problem is also polynomially solvable if each  edge of the input graph is equipped with a positive length and the goal is to check whether there is a solution $P_1,P_2$ such that $P_i$ is a shortest $(s_i,t_i)$-path for $i=1,2$. This was recently generalized to digraphs by Berczi and Kabayashi \cite{bercziESA17} who proved the following:

\begin{theorem}
  \label{shortest2link}
  There exists  a polynomial algorithm for the following problem as well as its arc-version. Given a digraph $D=(V,A)$, vertices $s_1,s_2,t_1,t_2\in V$, a weight function $w$ on $A$ such that the weight of every directed cycle is positive and numbers $a_1$ and $a_2$; decide whether $D$ has disjoint paths $P^1_1,\dots,P^1_{a_1},P^2_1,\dots,P^2_{a_2}$ such that $P^i_j$ is a shortest  $(s_i,t_i)$-path for $i=1,2$ and $1\leqslant j\leqslant a_i$.
\end{theorem}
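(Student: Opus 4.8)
The plan is to exploit the rigidity of shortest paths to reduce the problem to a flow computation on an auxiliary acyclic network. First I would preprocess: since no directed cycle has negative weight, the distances $d(s_i,v)$ and $d(v,t_i)$ are well defined and computable in polynomial time by Bellman--Ford, and after applying a Johnson-type potential reweighting I may assume without loss of generality that $w\ge 0$, since this shifts the weight of every $(s_i,t_i)$-path by the same constant and hence preserves the set of shortest $(s_i,t_i)$-paths for each $i$. For $i\in\{1,2\}$ let $D_i$ be the subdigraph of arcs $(u,v)$ with $d(s_i,u)+w(u,v)+d(v,t_i)=d(s_i,t_i)$; then the shortest $(s_i,t_i)$-paths are exactly the $(s_i,t_i)$-dipaths in $D_i$, and each $D_i$ is acyclic, because a directed cycle all of whose arcs are tight for $s_i$ would have total weight $0$, contradicting positivity of cycle weights. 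The arc and vertex versions are interchangeable by the standard vertex-splitting reduction, so I would treat the vertex-disjoint version.

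The core is a monotonicity lemma. Set $h(v)=d(s_1,v)-d(s_2,v)$. Combining the triangle inequalities $d(s_j,w)\le d(s_j,v)+w(v,w)$ with tightness, one checks that along any arc of $D_1$ the value $h$ is non-decreasing, while along any arc of $D_2$ it is non-increasing. Hence every shortest $(s_1,t_1)$-path sweeps the values of $h$ upward and every shortest $(s_2,t_2)$-path sweeps them downward. Grouping the vertices into ``slabs'' of constant $h$-value, a pair-$1$ path traverses the slabs in increasing order and a pair-$2$ path in decreasing order; moreover an arc of $D_1$ that stays inside a slab satisfies $d(s_2,w)=d(s_2,v)+w(v,w)$ as well, so inside each slab $d(s_1,\cdot)$ and $d(s_2,\cdot)$ increase in lockstep and the two DAGs share a common topological order. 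Consequently two paths from different pairs can only meet inside a slab, and there the potential conflict is governed by an ordinary acyclic structure.

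Building on this, I would construct a polynomial-size acyclic auxiliary digraph $H$ whose arcs encode, slab by slab, the forward progress of the pair-$1$ bundle and the reversed progress of the pair-$2$ bundle (reversing pair $2$ so that it too becomes $h$-increasing and aligned with the same sweep), with capacity $1$ on the vertices to enforce disjointness. The transversality of the two sweeps, one increasing and one decreasing in $h$, is what should let the cross-pair disjointness requirement be expressed through vertex capacities rather than as a genuine two-commodity condition. Finding $a_1$ disjoint shortest $(s_1,t_1)$-paths together with $a_2$ disjoint shortest $(s_2,t_2)$-paths then becomes a single integral flow of prescribed value in $H$, solvable in time polynomial in $n$, $a_1$ and $a_2$; when $a_1+a_2$ is treated as a fixed constant one could alternatively feed $H$ to the acyclic linkage algorithm of Theorem~\ref{acycliclink}.

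The main obstacle is the third step: proving that the auxiliary network is correct, i.e. that integral flows in $H$ correspond exactly to families of disjoint shortest paths in $D$. The delicate point is the uncrossing argument inside a slab, namely showing that whenever paths of the two bundles interleave they can be rerouted, without leaving $D_1$ or $D_2$ and without destroying disjointness, into the canonical non-crossing pattern that $H$ records, and then checking that this local rerouting can be performed consistently across all slabs simultaneously. Establishing this exchange property, and with it the integrality of the resulting flow formulation, is where essentially all of the difficulty lies; the remaining ingredients (distance computation, reweighting, vertex splitting, and the flow computation itself) are routine.
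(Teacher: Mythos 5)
First, a remark on the ground truth: the paper does not prove Theorem~\ref{shortest2link} at all. It is imported as a black box from B\'erczi and Kobayashi \cite{bercziESA17}, so there is no in-paper proof to compare yours against. Judged on its own, your write-up gets the standard preliminaries right, and these do match the opening moves of the published argument: the tight subgraphs $D_i$, their acyclicity (a tight cycle would have weight $0$, contradicting positivity of cycle weights), and the monotonicity of the potential $h(v)=d(s_1,v)-d(s_2,v)$ along arcs of $D_1$ (non-decreasing) and of $D_2$ (non-increasing) are all correct and are indeed the key structural facts one starts from. (Minor point: an arc of $D_1$ that stays inside a slab is tight for $d(s_2,\cdot)$, but that alone does not place it in $D_2$, since membership in $D_2$ also requires tightness of $d(\cdot,t_2)$.)

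The gap is exactly where you locate it, and it is not a routine verification but the entire content of the theorem. A single-commodity integral flow in an auxiliary DAG $H$, with commodity $2$ reversed and disjointness enforced by unit vertex capacities, cannot certify the linkage: a flow from a super-source to a super-sink does not remember which source a unit of flow left from, so a flow of value $a_1+a_2$ in $H$ may decompose into ``mispaired'' paths, e.g.\ one running from $s_1$ to the sink attached to $s_2$ (mixing forward $D_1$-arcs with reversed $D_2$-arcs) and another from $t_2$ to $t_1$. The $h$-monotonicity does not forbid such paths ($h$ increases along both arc types in $H$), and two mispaired paths need not share a vertex, so there is no local exchange that uncrosses them; this is precisely the obstruction that makes integral two-commodity flow hard and prevents the problem from collapsing to a max-flow computation. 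The actual proof in \cite{bercziESA17} does not go through a single flow: it uses the consequence of your monotonicity lemma that the common vertices of a shortest $(s_1,t_1)$-path and a shortest $(s_2,t_2)$-path appear in reverse orders on the two paths, and builds on that a genuinely two-commodity dynamic-programming argument in the spirit of Theorem~\ref{acycliclink}, with additional work to handle the multiplicities $a_1,a_2$. As it stands, your proposal establishes the preliminaries and then conjectures the theorem rather than proving it.
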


There are several other papers dealing with  shortest path version  of the {\sc $k$-linkage} problem, see e.g. \cite{bjorklundICALP14,schrijverTA7,kobayashiDO7}.

In this paper we consider the following  variant of the weak-2-linkage problem  where the paths do not have to be shortest paths (in terms of number of arcs)  but there is a bound on how far from being shortest they can be.

\defproblem{{\sc short weak 2-linkage } SW2L($D,s_1,s_2,t_1,t_2,k_1,k_2$)}{A digraph $D=(V,A)$, vertices $s_1,s_2,t_1,t_2\in V$ and  natural numbers $k_1,k_2$}{Is there a pair of arc-disjoint paths $P_1,P_2$ such that $P_i$ is an $(s_i,t_i)$-path and\\ $|A(P_i)|\leq d(s_i,t_i)+k_i$?}\\

Clearly this problem is NP-complete when $k_1,k_2=n-1$ since that puts no restriction on $P_1,P_2$ in a solution. The main result of our paper is that when $k_1,k_2$ are both constants the {\sc short weak 2-linkage } problem can be solved in polynomial time. We also prove that the problem is NP-complete when there is no restriction on the length of one of the paths. Finally, we show that under the exponential time hypothesis, there is no polynomial algorithm for the {\sc short weak 2-linkage } when $k_1,k_2\in O(\log^{1+\epsilon}n)$ no matter how small the value of $\epsilon$ is as long as it is positive.


\section{2-linkage with almost shortest paths}

Let $D$ be a digraph and $s$ a vertex. The {\bf reach} of $s$ is the set of vertices $x$ such that there exists
a path from $s$ to $x$ in $D$. Using breath-first-search we can partition the reach of a vertex $s$ into {\bf levels}, 
such that $L_s^i$ denotes the set of vertices $x$ such that the shortest path from $s$ to $x$ is of length $i$. We say that an arc $uv$ is {\bf between} two levels if, $d(s,u)<\infty$ and  $d(s,v)=d(s,u)+1$.

Suppose $s_1$ and $s_2$ are fixed, let $A_1$ denote the set of arcs between two consecutive levels  from $s_1$ 
and $A_2$ the set of arcs between two consecutive levels from $s_2$. Note that both $A_1$ and $A_2$ form acylic digraphs. Furthermore, an arc $uv$ is in $A_i$ if and only if some shortest $(s_i,v)$-path uses the arc $uv$. 
We will use the following lemma:

\begin{lemma}\label{lemma:paths}
    If $P$ is a path from $s_1$ to $t_1$ of length at most $d(s_1, t_1) + k$, then $P$ uses at most $k$ arcs not belonging to $A_1$. 
\end{lemma}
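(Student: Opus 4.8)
The plan is to assign to each vertex $x$ the label $\ell(x) = d(s_1,x)$ and to track how this label changes as we traverse $P$. Write $P = x_0 x_1 \cdots x_m$ with $x_0 = s_1$, $x_m = t_1$ and $m = |A(P)| \le d(s_1,t_1)+k$. Since $P$ is an $(s_1,t_1)$-path, every vertex $x_j$ is reachable from $s_1$, so each label $\ell(x_j)$ is finite and well defined.

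First I would establish the basic BFS inequality: for every arc $uv$ with $d(s_1,u)<\infty$ one has $d(s_1,v)\le d(s_1,u)+1$, since concatenating a shortest $(s_1,u)$-path with the arc $uv$ yields an $(s_1,v)$-walk of length $d(s_1,u)+1$. Consequently, along $P$ each step raises the label by at most one, i.e. $\ell(x_j)-\ell(x_{j-1})\le 1$ for all $j$. By the definition of an arc lying \textbf{between} two levels, the arc $x_{j-1}x_j$ belongs to $A_1$ exactly when $\ell(x_j)-\ell(x_{j-1})=1$; hence every arc of $P$ that is \emph{not} in $A_1$ satisfies $\ell(x_j)-\ell(x_{j-1})\le 0$.

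The core of the argument is then a telescoping sum. Summing the increments over the whole path gives $\sum_{j=1}^m(\ell(x_j)-\ell(x_{j-1}))=\ell(t_1)-\ell(s_1)=d(s_1,t_1)$. Each $A_1$-arc contributes exactly $1$ to this sum and each non-$A_1$-arc contributes at most $0$, so if $a$ denotes the number of $A_1$-arcs on $P$, we obtain $d(s_1,t_1)\le a$. Since the number of arcs of $P$ outside $A_1$ equals $m-a\le (d(s_1,t_1)+k)-d(s_1,t_1)=k$, the path uses at most $k$ arcs not belonging to $A_1$, as claimed.

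There is no genuine obstacle here; the only point requiring care is the asymmetry of the inequality, namely that an arc may decrease the label arbitrarily but can increase it by at most one, which is exactly what forces almost all arcs of a near-shortest path to be ``forward'' level-advancing arcs. The argument is essentially a discrete potential/charging argument and uses nothing beyond the level decomposition of the reach of $s_1$; in particular it does not rely on acyclicity of $A_1$.
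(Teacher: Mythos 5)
Your proof is correct and is essentially a rigorous rendering of the paper's own (much terser) argument: the paper simply notes that the path must pass through every level below $d(s_1,t_1)$ and must use an $A_1$-arc to advance a level, which is exactly the counting your telescoping sum formalizes. The extra care you take with the potential $\ell(x)=d(s_1,x)$ and the observation that non-$A_1$-arcs contribute at most $0$ is a welcome tightening, but it is the same approach.
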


\begin{proof}
Every path from $s_1$ to $t_1$ must visit every level with index smaller than $d(s_1, t_1)$ at least once. 
Moreover, it must use an arc of $A_1$ to go from one level to the next, which ends the proof.  
\end{proof}

\begin{theorem}
  \label{thm:SWLalg}
  For every fixed choice of positive integers $k_1,k_2$ the problem {\sc short weak linkage problem} with input $[D,s_1,s_2,t_1,t_2,k_1,k_2]$ is polynomially solvable. 
\end{theorem}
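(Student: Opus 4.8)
The plan is to combine Lemma~\ref{lemma:paths} with the polynomial algorithm for weak linkage in acyclic digraphs (Theorem~\ref{acycliclink}). By Lemma~\ref{lemma:paths}, any admissible $(s_i,t_i)$-path $P_i$ uses at most $k_i$ arcs outside $A_i$; call these the \emph{jump} arcs of $P_i$. Deleting them splits $P_i$ into at most $k_i+1$ maximal subpaths, each lying entirely inside the acyclic digraph $A_i$. Since $k_1,k_2$ are fixed, I would first enumerate, for each of the two paths, the ordered list of its jump arcs: there are only $O(|A|^{k_1})$ choices for $P_1$ and $O(|A|^{k_2})$ for $P_2$, hence polynomially many configurations. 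For a fixed configuration the jump arcs prescribe a list of \emph{waypoints}, and the task reduces to connecting consecutive waypoints by arc-disjoint subpaths lying in the appropriate $A_i$ and using none of the (already fixed, pairwise distinct) jump arcs.

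The next observation is that a subpath of $A_i$ from $x$ to $y$ has length exactly $d(s_i,y)-d(s_i,x)$ and is therefore a shortest $(x,y)$-path in $D$; conversely every arc of such a shortest path lies in $A_i$. Hence each waypoint-to-waypoint request is a shortest-path request, and the residual problem is to route $r\le k_1+k_2+2$ arc-disjoint shortest paths with prescribed endpoints, where the segments of $P_1$ may use only arcs of $A_1$ and those of $P_2$ only arcs of $A_2$. I would encode this as a single weak $r$-linkage instance on an auxiliary digraph $H$ built from a copy of $A_1$ and a copy of $A_2$ (with terminal pairs corresponding to the segments), augmented by gadgets that enforce arc-disjointness of the original arcs. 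Because $A_1$ and $A_2$ are acyclic and $r$ is a constant, once $H$ is acyclic Theorem~\ref{acycliclink} solves the linkage in time polynomial in $|H|$; verifying the length budget $|A(P_i)|\le d(s_i,t_i)+k_i$ and the simplicity of the reconstructed paths (automatic inside each $A_i$, since $d(s_i,\cdot)$ is strictly monotone along its arcs) is then routine, and summing over all configurations keeps the total running time polynomial.

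The main obstacle is the construction of the auxiliary digraph $H$: it must at once (i) force the $P_1$-segments to stay in $A_1$ and the $P_2$-segments in $A_2$, and (ii) forbid any arc of $A_1\cap A_2$ from being used by both paths, all while remaining acyclic so that Theorem~\ref{acycliclink} applies. The difficulty is that the two copies are naturally ordered by the incompatible potentials $d(s_1,\cdot)$ and $d(s_2,\cdot)$: a shared arc $uv\in A_1\cap A_2$ satisfies $d(s_1,v)=d(s_1,u)+1$ and $d(s_2,v)=d(s_2,u)+1$, but along an arc of $A_1$ the value $d(s_2,\cdot)$ may drop, so $A_1\cup A_2$ need not be acyclic. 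A naive merge of the two copies of a shared arc into one bottleneck arc (the obvious way to model the mutual-exclusion constraint (ii)) reintroduces exactly these cycles by letting a path switch copies. I expect the crux of the proof to be a gadget, together with a supporting exchange/uncrossing argument or additional bounded guessing, that enforces (ii) without ever linking the two copies in a way that closes a directed cycle, plus a verification that every weak linkage found in $H$ pulls back to a genuine pair of arc-disjoint, almost-shortest $(s_i,t_i)$-paths in $D$.
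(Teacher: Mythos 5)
Your first half matches the paper exactly: guessing the at most $k_i$ ``jump'' arcs of each path (justified by Lemma~\ref{lemma:paths}), observing that the remaining segments live inside $A_i$ and are shortest paths, and noting that this leaves only polynomially many configurations. The gap is in the second half. You propose to solve the residual routing problem by building an acyclic auxiliary digraph $H$ and invoking Theorem~\ref{acycliclink}, but you never construct $H$; you explicitly defer the mutual-exclusion gadget for arcs of $A_1\cap A_2$ to ``the crux of the proof.'' That deferred step is not a technicality --- it is the entire difficulty. Already in the base case $k_1=k_2=0$ (no jump arcs at all) your residual problem is precisely the arc-disjoint \emph{shortest} paths problem for two terminal pairs, which is the subject of Theorem~\ref{shortest2link} and was a nontrivial result of Berczi and Kobayashi; if it reduced to acyclic weak $2$-linkage via a simple gadget, that theorem would have an elementary proof. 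As you yourself observe, $A_1\cup A_2$ need not be acyclic and any identification of the two copies of a shared arc can close directed cycles, so Theorem~\ref{acycliclink} is simply the wrong tool here, and no amount of additional bounded guessing fixes this (the number of shared arcs is not bounded by a function of $k_1,k_2$).

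The paper's proof resolves exactly this step differently: for a fixed feasible choice of jump arcs $E_1=(v_1,u_1),\dots,(v_i,u_i)$ and $E_2$, it builds $D'$ by attaching to $D$ new terminals $s_1',t_1',s_2',t_2'$ with fan-out/fan-in paths of calibrated lengths $d_D(s_1,x)+1$ and $d_D(y,t_1)+1$ to the heads and tails of the jump arcs. This makes every shortest $(s_1',t_1')$-path in $D'$ have length exactly $d_D(s_1,t_1)+2$ and forces its middle portion to lie in $A_1$ (and symmetrically for $A_2$), so the residual problem becomes: find $i+1$ plus $j+1$ pairwise arc-disjoint shortest paths for the two new terminal pairs. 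That is exactly the setting of Theorem~\ref{shortest2link} (with unit weights, so all cycles have positive weight), which handles both the ``stay inside $A_\ell$'' constraint and the arc-disjointness across the two families simultaneously, with no acyclicity requirement. If you replace your hypothetical gadget-plus-Theorem~\ref{acycliclink} step with this reduction to Theorem~\ref{shortest2link}, your argument becomes the paper's proof; as written, it is incomplete at its central step.
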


\pf
Let $k=\max\{k_1,k_2\}$.
We shall describe an algorithm that runs in $n^{O(k)}$ for the problem. 
Let $E_1 = (v_1, u_1), \dots, (v_i, u_i)$ and $E_2 = (z_1, w_1), \dots, (z_j, w_j)$ be two ordered sets of at most $k$ arcs each.
Recall that  $A_1$ denotes the set of arcs between two consecutive levels from $s_1$
and $A_2$ the set of arcs between two consecutive levels from $s_2$. 
Let $d_{A_{\ell}}(x,y)$ denote the distance from $x$ to $y$ in the digraph induced by the arcs in $A_{\ell}$.
We call $E_1$ and $E_2$ {\em feasible} if the following holds.

\begin{itemize}
\item $d_{A_1}(s_1,v_1) + 1 + d_{A_1}(u_1,v_2) + 1 + d_{A_1}(u_2,v_3)  + \cdots + d_{A_1}(u_{i-1},v_i) + 1 + d_{A_1}(u_i,t_1) \leqslant d_D(s_1,t_1)+k_1$
\item $d_{A_2}(s_2,z_1) + 1 + d_{A_2}(w_1,z_2) + 1 + d_{A_2}(w_2,z_3)  + \cdots + d_{A_2}(w_{j-1},z_i) + 1 + d_{A_2}(w_j,t_2) \leqslant d_D(s_2,t_2)+k_2$
\end{itemize}


We will describe a $O(n^C)$  algorithm  for some constant $C$, which  decides if there exists a solution $P_1$, $P_2$ to the problem
such that for $\ell=1,2$, $P_{\ell}$ only uses arcs of $A_{\ell}$ and $E_{\ell}$. 
To solve the general question, we only need to run this algorithm for all feasible choices of $E_1$
and $E_2$. As there are at most ${m \choose k}$ ways of choosing a $k$-feasible set, we note that there are less than $(m^k)^2 \leq n^{4k}$
(as $m \leq n^2$) ways of choosing $E_1$ and $E_2$. So the algorithm only needs to be run at most $O(n^{4k})$ times.

Let now $E_1$ and $E_2$ be fixed. We create the digraph $D'$ by adding the vertices $s_1',t_1',s_2',t_2'$ to $D$ and the following paths:

\begin{itemize}
\item A path from $s_1'$ to every vertex $x \in \{s_1,u_1,u_2,\ldots,u_i\}$ of length $d_D(s_1,x)+1$. 
\item A path from every vertex $x \in \{t_1,v_1,v_2,\ldots,v_i\}$ to $t_1'$ of length $d_D(x,t_1)+1$. 
\item A path from $s_2'$ to every vertex $x \in \{s_2,w_1,w_2,\ldots,w_j\}$ of length $d_D(s_2,x)+1$. 
\item A path from every vertex $x \in \{t_2,z_1,z_2,\ldots,z_j\}$ to $t_2'$ of length $d_D(x,t_2)+1$.      
\end{itemize}

Note that the internal vertices on all the above added paths are distinct and new vertices. 
Let $P_1^*$ be a $(s_1',t_1')$-path in $D'$ and let $x$ be the first vertex on $P_1^*$ from $\{s_1,u_1,u_2,\ldots,u_i\}$
and let $y$ be the last vertex on $P_1^*$ from $\{t_1,v_1,v_2,\ldots,v_i\}$. Then the subpath from $s_1'$ to $x$ has length 
$d_D(s_1,x)+1$ and the subpath from $y$ to $t_1'$ has length $d_D(y,t_1)+1$. This implies that the length of $P_1^*$ is the 
following.
\[
(d_D(s_1,x)+1)+(d_D(y,t_1)+1)+d_D(x,y) = d_D(s_1,x)+d_D(x,y)+d_D(y,t_1)+2 \geq d_D(s_1,t_1)+2
\]

As there exists a $(s_1',t_1')$-path of length $d_D(s_1,t_1)+2$ in $D'$ (using the arcs $s_1' s_1$ and $t_1 t_1'$ and a shortest
$(s_1,t_1)$-path in $D$), we note that the shortest 
$(s_1',t_1')$-path in $D'$ has length exactly $d_D(s_1,t_1)+2$. Furthermore if the subpath of $P_1^*$ from $x$ to $y$ only uses arcs from
$A_1$ then it has length $d_D(x,y)$ and we have equality everywhere in the above equation, which implies that the length of $P_1'$ is 
$d_D(s_1,t_1)+2 = d_{D'}(s_1',t_1')$. Analogously if the length of $P_1^*$ is $d_{D'}(s_1',t_1')$ then the subpath from $x$ to $y$ only uses
arcs from $A_1$.

Clearly the analogous result also holds for a shortest path from $s_2'$ to $t_2'$.
By Theorem \ref{shortest2link}, we know that we can determine in polynomial time if there exist $i+1$ paths from $s_1$ to $t_1$ and $j+1$ paths from $s_2$ to $t_2$ such that all $i+j+2$ paths are arc-disjoint.


We claim that if such paths exist, then the answer to our instance of the {\sc short weak linkage problem}
is {\em true} and if there is no such $i+j+2$ arc-disjoint paths for any feasible choice of $E_1$ and $E_2$, then the answer to
our instance is {\em false}.

First assume that we found $i+j+2$ arc-disjoint paths for some feasible choice of $E_1$ and $E_2$. Now remove all vertices in $V(D') \setminus V(D)$
from the $(s_1',t_1')$-paths and add the arcs $E_1$. Note that the outdegree of $s_1$ will be one (as it belongs to one of the paths) and the indegree
of $s_1$ will be zero. Analogously the indegree of $t_1$ will be one and the outdegree will be zero. All other vertices will have indegree and out degree equal to each other (if they belong to $k$ paths then the indegree and outdegree will both be $k$).
Therefore the arcs in the resulting subdigraph form a path from $s_1$ to $t_1$ plus possibly a number of cycles.  
As the total number of arcs in the subdigraph is less than
$d(s_1,t_1)+k_1$ the path from $s_1$ to $t_1$ (after discarding any cycles) also has length less than $d(s_1,t_1)+k_1$. Indeed, the graph contains all the arcs of $E_1$ ($i$ arcs) and all the arcs of the $i+1$ paths of length $d(s_1,t_1)+2$ except those of $D'\setminus D$. The number of arcs of $D'\setminus D$ in those paths is $N_1=\sum_{x\in\{s_1,u_1,\dots,u_i\}}(d(s_1,x)+1)+\sum_{x\in\{v_1,\dots,v_i,t_1\}}(d(x,t_1)+1)$. 
Let us set $u_0=s_1$ and $v_{i+1}=t_1$. The total number of arcs in our graph is:
\[\begin{split}
i+(d(s_1,t_1)+2)\times (i+1)-N_1&=i+\sum_{\ell=0}^i d(s_1,t_1)-d(s_1,u_{\ell})-d(v_{\ell+1},t_1)\\
&\leqslant i+\sum_{\ell=0}^i d(u_{\ell},v_{\ell+1})\\
&\leqslant d(s_1,t_1)+k_1\quad\text{by definition of the feasibility of }E_1
  \end{split}
\]

Analogously we find a path from $s_2$ to $t_2$ of length less than $d(s_2,t_2)+k_2$. By our construction these paths are arc-disjoint, completing the proof of one direction.

Now assume that there exist arc-disjoint paths $P_1$ and $P_2$ in $D$, such that $P_{\ell}$ is a $(s_{\ell},t_{\ell})$-path of length less than $d(s_{\ell},t_{\ell})+k_{\ell}$.
Let $E_{\ell}$ be the arcs on $P_{\ell}$ that do not belong to $A_{\ell}$ (${\ell} \in [2]$). Hence, $E_1$ and $E_2$ are feasible. Using these $E_1$ and $E_2$ and the 
subpaths of $P_1$ and $P_2$ after removing the arcs in $E_1$ and $E_2$ we note that we can obtain the desired $i+j+2$ arc-disjoint paths in $D'$.
This completes the proof.

\vspace{0.4cm}

$D'$ has size $|A(D)|^2$, so the existence of this path $Q$ can be checked in polynomial time, 
and the overall problem can be solved in time $n^{O(k)}$.

\section{Non-polynomial cases}

This subsection is devoted to the proof of the NP-completeness of the problem of semi-short weak 2-linkage:

\defproblem{{\sc semi-short weak 2-linkage } SSW2L($D,s_1,s_2,t_1,t_2,k$)}{A digraph $D=(V,A)$, vertices $s_1,s_2,t_1,t_2\in V$ and a natural number $k$}{Is there a pair of arc-disjoint paths $P_1,P_2$ such that $P_1$ is an $(s_1,t_1)$-path of length \\ $|A(P_1)|\leq d(s_1,t_1)+k$ and $P_2$ is an $(s_2,t_2)$-path?}


\begin{theorem}
  \label{oneshort}
  The semi-short weak 2-linkage problem  is NP-complete for all values of $k$.
\end{theorem}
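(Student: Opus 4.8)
The plan is to prove membership in NP and then NP-hardness by a reduction from the {\sc weak 2-linkage} problem, which is NP-complete \cite{fortuneTCS10}. Membership is routine: given a candidate pair $(P_1,P_2)$ we check in polynomial time that they are arc-disjoint, that $P_1$ is an $(s_1,t_1)$-path and $P_2$ is an $(s_2,t_2)$-path, and that $|A(P_1)| \le d(s_1,t_1)+k$, where $d(s_1,t_1)$ is computed by breadth-first search. The real content is the hardness reduction, which must work simultaneously for every fixed value of $k$, including $k=0$.

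The key idea is to engineer the constructed digraph $D'$ so that the length restriction on the first path becomes \emph{vacuous}: I would build $D'$ so that \emph{every} $(s_1',t_1')$-path has exactly the same length, and hence every such path is automatically a shortest path satisfying $|A(P_1)| \le d_{D'}(s_1',t_1')+k$ for all $k \ge 0$. This is precisely what lets a single construction settle the theorem for all $k$ at once. Concretely, the routing freedom used to encode the instance should be realised through equal-length ``choice'' gadgets (parallel branches of identical length, so that the total length of a traversal is independent of the branches taken), while the genuine interaction between the two paths is pushed onto the arc-disjointness requirement against the unconstrained path $P_2$. The goal is a correspondence: $D'$ is a yes-instance of {\sc semi-short weak 2-linkage} with parameter $k$ if and only if the original instance is a yes-instance of {\sc weak 2-linkage}. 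I would first argue the forward direction, turning a weak-2-linkage solution into an equal-length $P_1$ together with a free $P_2$, and then the converse, extracting an arc-disjoint pair in the original digraph from any solution in $D'$.

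The main obstacle is reconciling two competing demands inside a single gadget: making all $(s_1',t_1')$-paths equal in length forces a layered, ``leveled'' structure, whereas faithfully preserving arc-disjointness requires that each arc carrying information be usable only once. A naive time-layered simulation of the original digraph fails the converse direction, because one original arc then splits into many copies (one per level) and a spurious $D'$-solution can reuse the same original arc at two different levels, something that is forbidden in the original arc-disjoint instance. I therefore expect the crux of the proof to be a gadget that simultaneously (i) equalizes the length contributed by every step, and (ii) routes all uses of a given arc through a single bottleneck arc that can be traversed only once, with the equal-length property restored by padding the connectors to a common length. Verifying that this gadget admits no cheating route, and that both directions of the equivalence go through, is where the bulk of the careful work lies.
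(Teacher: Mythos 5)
There is a genuine gap: the reduction you propose is never actually constructed, and the obstacle you yourself flag is a real obstruction to the route, not a detail to be checked at the end. Since a general instance of {\sc weak 2-linkage} contains directed cycles, you cannot equalize the lengths of all $(s_1',t_1')$-paths by subdividing arcs (that would require a potential function making every cycle have length zero), so you are forced into a time-expanded/layered simulation. But then the two requirements you list are in direct tension. If each original arc $uv$ is routed through a single bottleneck arc shared by all layers, a path may enter the bottleneck from layer $t$ and leave towards layer $t'\neq t$; either the connector lengths are layer-independent, in which case the length of a $(s_1',t_1')$-path again grows with the number of original arcs it simulates and the constraint $|A(P_1)|\le d(s_1,t_1)+k$ is \emph{not} vacuous (for small $k$ it forces $P_1$ to project to a shortest path in $D$, which is not what {\sc weak 2-linkage} asks), or the connectors are padded per layer, in which case layer-jumping produces $(s_1',t_1')$-paths strictly shorter than the intended ones, so the intended lift of a {\sc weak 2-linkage} solution may violate $|A(P_1)|\le d_{D'}(s_1',t_1')+k$ and the forward direction fails. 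If instead you use one bottleneck per layer, arc-disjointness in $D'$ no longer implies arc-disjointness in $D$ and the converse fails. You have identified this dilemma but offered no gadget that escapes it, and escaping it is the entire content of the theorem.

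The paper avoids the issue by not reducing from {\sc weak 2-linkage} at all: it reduces directly from 3-SAT. Each variable $x_i$ yields a gadget with two internally disjoint $(u_i,v_i)$-paths of \emph{equal} length (your ``equal-length choice gadget'' intuition survives here), the gadgets are chained so that every $(s_1,t_1)$-path avoiding the clause vertices is automatically a shortest path encoding a truth assignment, and the clause vertices $c_j,c'_j$ are attached to the literal paths by arcs subdivided $mn+2n+k$ times, so that $P_1$ (constrained to length at most $d(s_1,t_1)+k$) can never afford to touch them while the unconstrained $P_2$ must thread through every clause gadget by borrowing two consecutive arcs of a literal path not used by $P_1$. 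This architecture gives you the ``all admissible $P_1$'s have equal length'' property for free, with no cycles, no layers and no bottleneck gadget to verify. To repair your write-up you would either need to exhibit and verify the bottleneck gadget (which I do not believe exists in the form you describe), or switch to a direct reduction from a constraint-satisfaction problem along the paper's lines.
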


\pf Let $\cal F$ be an instance of 3-SAT with $m$ clauses  all of which consists of three literals over  the variables $x_1,x_2,\ldots{},x_n$. We may assume that every variable appears as a literal in at least one clause.  
For each $i\in [n]$ let $a_i$ be the maximum number of times that $x_i$ occurs as the same  literal ($x_i$ or $\bar{x}_i$).
Fix an  ordering $C_1,C_2,\ldots{},C_m$ of the clauses. This induces an ordering of the occurences of each literal.\\

We now define a digraph $D=D({\cal F})$ and vertices $s_1,s_2,t_1,t_2$ such that $D$ contains arc-disjoint paths $P_1,P_2$ satisfying  that $P_1$ is a $(s_1,t_1)$-path of length at most $d(s_1,t_1)+k$ and $P_2$ is an $(s_2,t_2)$-path in $D$ if and only if $\cal F$ is satisfiable. We start by defining a gadget $W_i$ corresponding to the variable $x_i$ in $\cal F$ for each $i\in [n]$. The digraph $W_i$ consists of two directed $(u_i,v_i)$-paths,
$T_i=u_iy_{(i,2a_i)}y_{(i,2a_i-1)}\ldots{}y_{(i,2)}y_{(i,1)}v_i$ and $F_i=u_i\bar{y}_{(i,2a_i)}\bar{y}_{(i,2a_i-1)}\ldots{}\bar{y}_{(i,2)}\bar{y}_{(i,1)}v_i$. We first build an intermediate digraph $D'$ and then modify this to get $D$. The vertices of $D'$ will consist of all the vertices of $W_1,W_2,\ldots{},W_n$, a set of vertices $\{c_1,c'_1,\ldots{},c_m,c'_m\}$ and finally 4 new vertices $s_1,s_2,t_1,t_2$. We link these together as follows:
\begin{itemize}
\item Add an arc from $v_i$ to $u_{i+1}$ for $i\in [n-1]$.
\item Add the arcs $s_1u_1,v_nt_1$.
\item Add the arcs $s_2c_1,c'_mt_2$
\item Add the arcs $c'_ic_{i+1}$ for $i\in [m-1]$.
  \item Finally we add the arcs that mimic the clauses of $\cal F$. For each clause we add three arcs going out of $c_j$ and three arcs entering $c'_j$ as follows: let $C_j=(\ell_{j,1}\vee \ell_{j,2}\vee\ell_{j,3})$. If $\ell_{j,i}$ is the variable $x_i$ and this is the $r$'th occurence of $x_i$ according to the ordering of the clauses, then we add the arcs $c_jy_{(i,2r)},y_{(i,2r-1)}c'_j$. Similarly, if 
    $\ell_{j,i}$ is the variable $\bar{x}_f$ and this is the $c$'th occurrence of $\bar{x}$ according to the ordering of the clauses, then we add the arcs
    $c_j\bar{y}_{(f,2c)},\bar{y}_{(f,2c-1)}c'_j$. We do the same for the remaining two literals of $C_j$.
  \end{itemize}

 To obtain $D$ from $D'$ we subdivide each of the arcs we described in the last bullet above $mn+2n+k$ times. This last step is there to make sure 
that any $(s_1,t_1)$-path in $D$ of length at most $d(s_1,t_1)+k$ uses no vertex in $\{c_1,c'_1,\ldots{},c_m,c'_m\}$ 
(subdividing $(2n+k+\sum_{i=1}^n a_i)/2$ times would also suffice). 
Note that every $(s_1,t_1)$-path which avoids $\{c_1,c'_1,\ldots{},c_m,c'_m\}$ is a shortest $(s_1,t_1)$-path.

  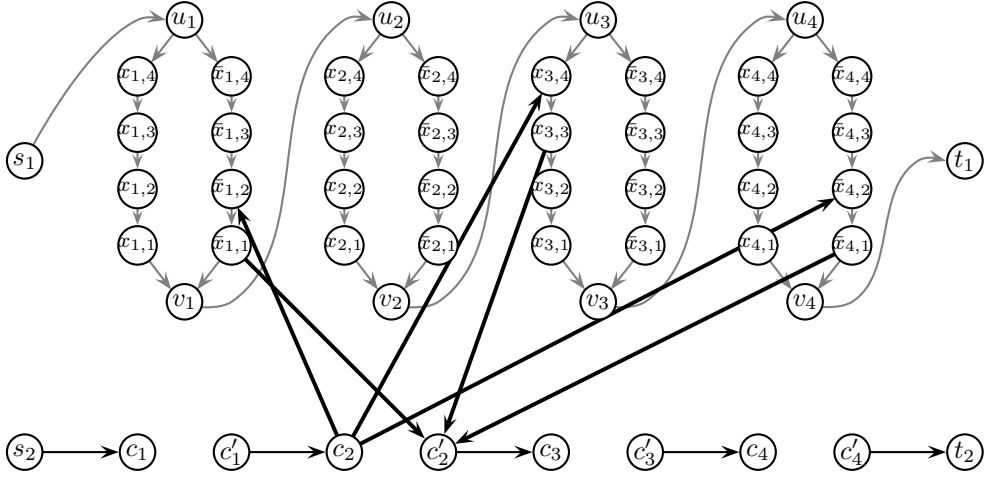
\begin{figure}[!h]
\begin{pspicture}(-0.2,-0.2)(13.2,6.45)

\pscurve[linecolor=gray]{->, arrowsize=5pt}(0.25,4.125)(1.75,6)(2.125,6)

\psline[linecolor=gray]{->, arrowsize=5pt}(2.375,6)(2.82,5.43)
\psline[linecolor=gray]{->, arrowsize=5pt}(2.375,6)(1.93,5.43)
\psline[linecolor=gray]{->, arrowsize=4pt}(3,5.25)(3,4.77)
\psline[linecolor=gray]{->, arrowsize=4pt}(3,4.5)(3,4.02)
\psline[linecolor=gray]{->, arrowsize=4pt}(3,3.75)(3,3.27)
\psline[linecolor=gray]{->, arrowsize=4pt}(1.75,5.25)(1.75,4.77)
\psline[linecolor=gray]{->, arrowsize=4pt}(1.75,4.5)(1.75,4.02)
\psline[linecolor=gray]{->, arrowsize=4pt}(1.75,3.75)(1.75,3.27)
\psline[linecolor=gray]{->, arrowsize=5pt}(3,3)(2.54,2.41)
\psline[linecolor=gray]{->, arrowsize=5pt}(1.75,3)(2.2,2.41)

\pscurve[linecolor=gray]{->, arrowsize=5pt}(2.125,2.25)(3,2.25)(4.5,6)(4.875,6)

\psline[linecolor=gray]{->, arrowsize=5pt}(5.125,6)(5.57,5.43)
\psline[linecolor=gray]{->, arrowsize=5pt}(5.125,6)(4.68,5.43)
\psline[linecolor=gray]{->, arrowsize=4pt}(5.75,5.25)(5.75,4.77)
\psline[linecolor=gray]{->, arrowsize=4pt}(5.75,4.5)(5.75,4.02)
\psline[linecolor=gray]{->, arrowsize=4pt}(5.75,3.75)(5.75,3.27)
\psline[linecolor=gray]{->, arrowsize=4pt}(4.5,5.25)(4.5,4.77)
\psline[linecolor=gray]{->, arrowsize=4pt}(4.5,4.5)(4.5,4.02)
\psline[linecolor=gray]{->, arrowsize=4pt}(4.5,3.75)(4.5,3.27)
\psline[linecolor=gray]{->, arrowsize=5pt}(5.75,3)(5.29,2.41)
\psline[linecolor=gray]{->, arrowsize=5pt}(4.5,3)(4.95,2.41)

\pscurve[linecolor=gray]{->, arrowsize=5pt}(4.875,2.25)(5.75,2.25)(7.25,6)(7.625,6)

\psline[linecolor=gray]{->, arrowsize=5pt}(7.875,6)(8.32,5.43)
\psline[linecolor=gray]{->, arrowsize=5pt}(7.875,6)(7.43,5.43)
\psline[linecolor=gray]{->, arrowsize=4pt}(8.5,5.25)(8.5,4.77)
\psline[linecolor=gray]{->, arrowsize=4pt}(8.5,4.5)(8.5,4.02)
\psline[linecolor=gray]{->, arrowsize=4pt}(8.5,3.75)(8.5,3.27)
\psline[linecolor=gray]{->, arrowsize=4pt}(7.25,5.25)(7.25,4.77)
\psline[linecolor=gray]{->, arrowsize=4pt}(7.25,4.5)(7.25,4.02)
\psline[linecolor=gray]{->, arrowsize=4pt}(7.25,3.75)(7.25,3.27)
\psline[linecolor=gray]{->, arrowsize=5pt}(8.5,3)(8.04,2.41)
\psline[linecolor=gray]{->, arrowsize=5pt}(7.25,3)(7.7,2.41)

\pscurve[linecolor=gray]{->, arrowsize=5pt}(7.625,2.25)(8.5,2.25)(10,6)(10.375,6)

\psline[linecolor=gray]{->, arrowsize=5pt}(10.625,6)(11.07,5.43)
\psline[linecolor=gray]{->, arrowsize=5pt}(10.625,6)(10.18,5.43)
\psline[linecolor=gray]{->, arrowsize=4pt}(11.25,5.25)(11.25,4.77)
\psline[linecolor=gray]{->, arrowsize=4pt}(11.25,4.5)(11.25,4.02)
\psline[linecolor=gray]{->, arrowsize=4pt}(11.25,3.75)(11.25,3.27)
\psline[linecolor=gray]{->, arrowsize=4pt}(10,5.25)(10,4.77)
\psline[linecolor=gray]{->, arrowsize=4pt}(10,4.5)(10,4.02)
\psline[linecolor=gray]{->, arrowsize=4pt}(10,3.75)(10,3.27)
\psline[linecolor=gray]{->, arrowsize=5pt}(11.25,3)(10.79,2.41)
\psline[linecolor=gray]{->, arrowsize=5pt}(10,3)(10.45,2.41)

\pscurve[linecolor=gray]{->, arrowsize=5pt}(10.375,2.25)(11.25,2.25)(12.125,4.125)(12.5,4.125)

\psline{->, arrowsize=5pt}(0.25,0.25)(1.5,0.25)
\psline{->, arrowsize=5pt}(3,0.25)(4.25,0.25)
\psline{->, arrowsize=5pt}(5.75,0.25)(7,0.25)
\psline{->, arrowsize=5pt}(8.5,0.25)(9.75,0.25)
\psline{->, arrowsize=5pt}(11.25,0.25)(12.5,0.25)


\psline[linewidth=1.5pt]{->, arrowsize=5pt}(4.5,0.25)(3.08,3.51)
\psline[linewidth=1.5pt]{->, arrowsize=5pt}(3,3)(5.57,0.43)
\psline[linewidth=1.5pt]{->, arrowsize=5pt}(4.5,0.25)(7.13,5.03)
\psline[linewidth=1.5pt]{->, arrowsize=5pt}(7.25,4.5)(5.83,0.49)
\psline[linewidth=1.5pt]{->, arrowsize=5pt}(4.5,0.25)(11.03,3.63)
\psline[linewidth=1.5pt]{->, arrowsize=5pt}(11.25,3)(5.97,0.36)

%

 \pscircle[fillstyle=solid, fillcolor=white](0.25,4.125){0.25} 

 \pscircle[fillstyle=solid, fillcolor=white](2.375,2.25){0.25}
 \pscircle[fillstyle=solid, fillcolor=white](1.75,3){0.27}
 \pscircle[fillstyle=solid, fillcolor=white](1.75,3.75){0.27}
 \pscircle[fillstyle=solid, fillcolor=white](1.75,4.5){0.27}
 \pscircle[fillstyle=solid, fillcolor=white](1.75,5.25){0.27}
 \pscircle[fillstyle=solid, fillcolor=white](3,3){0.27}
 \pscircle[fillstyle=solid, fillcolor=white](3,3.75){0.27}
 \pscircle[fillstyle=solid, fillcolor=white](3,4.5){0.27}
 \pscircle[fillstyle=solid, fillcolor=white](3,5.25){0.27}
 \pscircle[fillstyle=solid, fillcolor=white](2.375,6){0.25}
 
 \pscircle[fillstyle=solid, fillcolor=white](5.125,2.25){0.25}
 \pscircle[fillstyle=solid, fillcolor=white](4.5,3){0.27}
 \pscircle[fillstyle=solid, fillcolor=white](4.5,3.75){0.27}
 \pscircle[fillstyle=solid, fillcolor=white](4.5,4.5){0.27}
 \pscircle[fillstyle=solid, fillcolor=white](4.5,5.25){0.27}
 \pscircle[fillstyle=solid, fillcolor=white](5.75,3){0.27}
 \pscircle[fillstyle=solid, fillcolor=white](5.75,3.75){0.27}
 \pscircle[fillstyle=solid, fillcolor=white](5.75,4.5){0.27}
 \pscircle[fillstyle=solid, fillcolor=white](5.75,5.25){0.27}
 \pscircle[fillstyle=solid, fillcolor=white](5.125,6){0.25}

 \pscircle[fillstyle=solid, fillcolor=white](7.875,2.25){0.25}
 \pscircle[fillstyle=solid, fillcolor=white](7.25,3){0.27}
 \pscircle[fillstyle=solid, fillcolor=white](7.25,3.75){0.27}
 \pscircle[fillstyle=solid, fillcolor=white](7.25,4.5){0.27}
 \pscircle[fillstyle=solid, fillcolor=white](7.25,5.25){0.27}
 \pscircle[fillstyle=solid, fillcolor=white](8.5,3){0.27}
 \pscircle[fillstyle=solid, fillcolor=white](8.5,3.75){0.27}
 \pscircle[fillstyle=solid, fillcolor=white](8.5,4.5){0.27}
 \pscircle[fillstyle=solid, fillcolor=white](8.5,5.25){0.27}
 \pscircle[fillstyle=solid, fillcolor=white](7.875,6){0.25}
 
 \pscircle[fillstyle=solid, fillcolor=white](10.625,2.25){0.25}
 \pscircle[fillstyle=solid, fillcolor=white](10,3){0.27}
 \pscircle[fillstyle=solid, fillcolor=white](10,3.75){0.27}
 \pscircle[fillstyle=solid, fillcolor=white](10,4.5){0.27}
 \pscircle[fillstyle=solid, fillcolor=white](10,5.25){0.27}
 \pscircle[fillstyle=solid, fillcolor=white](11.25,3){0.27}
 \pscircle[fillstyle=solid, fillcolor=white](11.25,3.75){0.27}
 \pscircle[fillstyle=solid, fillcolor=white](11.25,4.5){0.27}
 \pscircle[fillstyle=solid, fillcolor=white](11.25,5.25){0.27}
 \pscircle[fillstyle=solid, fillcolor=white](10.625,6){0.25}

 \pscircle[fillstyle=solid, fillcolor=white](12.75,4.125){0.25}
 
 \pscircle[fillstyle=solid, fillcolor=white](0.25,0.25){0.25}
 \pscircle[fillstyle=solid, fillcolor=white](1.75,0.25){0.25}
 \pscircle[fillstyle=solid, fillcolor=white](3,0.25){0.25}
 \pscircle[fillstyle=solid, fillcolor=white](4.5,0.25){0.25}
 \pscircle[fillstyle=solid, fillcolor=white](5.75,0.25){0.25}
 \pscircle[fillstyle=solid, fillcolor=white](7.25,0.25){0.25}
 \pscircle[fillstyle=solid, fillcolor=white](8.5,0.25){0.25}
 \pscircle[fillstyle=solid, fillcolor=white](10,0.25){0.25}
 \pscircle[fillstyle=solid, fillcolor=white](11.25,0.25){0.25}
 \pscircle[fillstyle=solid, fillcolor=white](12.75,0.25){0.25}

 \rput(0.25,4.125){$s_1$}
 
 \rput(2.375,2.25){$v_1$}
 \rput(1.75,3){\footnotesize{$x_{1,1}$}}
 \rput(1.75,3.75){\footnotesize{$x_{1,2}$}}
 \rput(1.75,4.5){\footnotesize{$x_{1,3}$}}
 \rput(1.75,5.25){\footnotesize{$x_{1,4}$}}
 \rput(3,3){\footnotesize{$\bar x_{1,1}$}}
 \rput(3,3.75){\footnotesize{$\bar x_{1,2}$}}
 \rput(3,4.5){\footnotesize{$\bar x_{1,3}$}}
 \rput(3,5.25){\footnotesize{$\bar x_{1,4}$}}
 \rput(2.375,6){$u_1$}

  \rput(5.125,2.25){$v_2$}
 \rput(4.5,3){\footnotesize{$x_{2,1}$}}
 \rput(4.5,3.75){\footnotesize{$x_{2,2}$}}
 \rput(4.5,4.5){\footnotesize{$x_{2,3}$}}
 \rput(4.5,5.25){\footnotesize{$x_{2,4}$}}
 \rput(5.75,3){\footnotesize{$\bar x_{2,1}$}}
 \rput(5.75,3.75){\footnotesize{$\bar x_{2,2}$}}
 \rput(5.75,4.5){\footnotesize{$\bar x_{2,3}$}}
 \rput(5.75,5.25){\footnotesize{$\bar x_{2,4}$}}
 \rput(5.125,6){$u_2$}

  \rput(7.875,2.25){$v_3$}
 \rput(7.25,3){\footnotesize{$x_{3,1}$}}
 \rput(7.25,3.75){\footnotesize{$x_{3,2}$}}
 \rput(7.25,4.5){\footnotesize{$x_{3,3}$}}
 \rput(7.25,5.25){\footnotesize{$x_{3,4}$}}
 \rput(8.5,3){\footnotesize{$\bar x_{3,1}$}}
 \rput(8.5,3.75){\footnotesize{$\bar x_{3,2}$}}
 \rput(8.5,4.5){\footnotesize{$\bar x_{3,3}$}}
 \rput(8.5,5.25){\footnotesize{$\bar x_{3,4}$}}
 \rput(7.875,6){$u_3$}

  \rput(10.625,2.25){$v_4$}
 \rput(10,3){\footnotesize{$x_{4,1}$}}
 \rput(10,3.75){\footnotesize{$x_{4,2}$}}
 \rput(10,4.5){\footnotesize{$x_{4,3}$}}
 \rput(10,5.25){\footnotesize{$x_{4,4}$}}
 \rput(11.25,3){\footnotesize{$\bar x_{4,1}$}}
 \rput(11.25,3.75){\footnotesize{$\bar x_{4,2}$}}
 \rput(11.25,4.5){\footnotesize{$\bar x_{4,3}$}}
 \rput(11.25,5.25){\footnotesize{$\bar x_{4,4}$}}
 \rput(10.625,6){$u_4$}

  \rput(12.75,4.125){$t_1$}

 \rput(0.25,0.25){$s_2$}
 \rput(1.75,0.25){$c_1$}
 \rput(3,0.25){$c'_1$}
 \rput(4.5,0.25){$c_2$}
 \rput(5.75,0.25){$c'_2$}
 \rput(7.25,0.25){$c_3$}
 \rput(8.5,0.25){$c'_3$}
 \rput(10,0.25){$c_4$}
 \rput(11.25,0.25){$c'_4$}
 \rput(12.75,0.25){$t_2$}

\end{pspicture}
    \caption{An intermediate stage in the construction of the graph associated to the formula ${\cal F}=(x_2\lor x_3\lor x_4)\land(\bar x_1\lor x_3\lor \bar x_4)\land(x_1\lor \bar x_2\lor \bar x_3)\land(x_1\lor x_2\lor \bar x_4)$. The bold edges denote paths of length $1+mn+2n+k$. For readability, this figure does not depict the arcs associated to the clauses other than the second one. The second clause contains the first occurrence of the literal $\bar{x_1}$ and $\bar{x_4}$ in $\mathcal F$ and the second occurrence of $x_3$.}\label{NPCfig}
    \end{figure}

    Suppose first that $D$ contains arc-disjoint paths $P_1,P_2$, such that $P_1$ is a $(s_1,t_1)$-path of length at most $d(s_1,t_1)+k$ and $P_2$ is an $(s_2,t_2)$-path. Recall that $P_1$ does not use any vertices from $\{c_1,c'_1,\ldots{},c_m,c'_m\}$.
Define a truth assignment $\phi:\{x_1,\ldots{},x_n\}\rightarrow \{0,1\}$ 
(where 1 corresponds to true and 0 to false) as follows: if $P_1$ uses the path 
    $T_i$, then let $\phi{}(x_i)=0$ and if it uses the path 
    $F_i$ we let $\phi{}(x_i)=1$. 
Note that it follows from the way that we added the arcs between $\{c_1,c'_1,\ldots{},c_m,c'_m\}$ and the $W_i$'s that 
$P_2$ visits all the vertices in $\{c_1,c'_1,\ldots{},c_m,c'_m\}$ and does so in the order $c_1,c'_1,\ldots{},c_m,c'_m$
(as if $P_2$ has visited the vertices $s_2,c_1,c'_1,\ldots{},c_j',c_j$ and then for example has a path into $x_{i,r}$ in $W_i$, then 
all vertices $x_{i,r}, x_{i,r-1}, x_{i,r-2},\ldots,x_{i,1}$ only have arcs in $D'$ into $\{c_1,c'_1,\ldots{},c_j,c'_{j+1}\}$).
Furthermore,  as  $P_2$ is arc-disjoint from $P_1$, it must hold that for each $j\in [m]$
at least one of the three $W_i$'s to which $c_j$ has an arc this arc goes to the opposite path to the one that $P_1$ used and 
hence $C_j$ will be true under $\phi{}$.\\

Assume now that $\phi$ is a satisfying truth assignment for $\cal F$. Then we construct the path $P_1$ as follows: if $\phi{}(x_i)=1$, then we use the path
$F_i$ inside $W_i$ and otherwise we use the path $T_i$. Finally we add the arcs between the $W_i$'s as well as $s_1u_1,v_nt_1$. By the construction of $D$, this $P_1$ is a shortest $(s_1,t_1)$-path. Now we show how to construct $P_2$ so that it is arc-disjoint from $P_1$. As $\phi$ satisfies each clause $C_j$ we can fix one literal $\ell$ of that is true under $\phi$ and then use the arcs between $c_j,c'_j$ and  the two vertices of the corresponding path inside that $W_i$ for which $\ell$ is a literal over $x_i$. It follows from the way we routed $P_1$ that $P_1$ does not use any arc on that path. Hence doing this for all the clause vertices and finally adding the arcs $s_2c_1,c'_mt_2$ and $c'_jc_{j+1}$ for $j\in [m-1]$ we obtain the desired path $P_2$.
\qed

\vspace{2mm}

Note that our algorithm in Theorem \ref{thm:SWLalg} is only polynomial for constant $k$, so it is natural to ask whether we could replace constant $k$ by some function of $n$.

Recall the so-called Exponential Time Hypothesis (ETH) which in one of many formulations says that there exist a real number $\delta>0$ so that no algorithm can solve 3-SAT instances with $m$ clauses in time $O(2^{\delta{}m})$.
This modification of the commonly known version of ETH is in fact equivalent to that, see e.g. \cite[Theorem 14.4]{cygan2015}.

\begin{theorem}
\label{N-logsquare}
Assuming that ETH is true, then for every $\epsilon>0$ there is no polynomial algorithm for 
{\sc weak short 2-linkage} problem when the input $D$ is a digraph on $n$ vertices and $k_1,k_2=\Theta(\log^{1+\epsilon}{}n)$.
\end{theorem}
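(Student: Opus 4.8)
The plan is to turn the qualitative NP-hardness of Theorem~\ref{oneshort} into a quantitative lower bound that matches the $n^{O(k)}$ running time of the algorithm in Theorem~\ref{thm:SWLalg}. Starting from a 3-SAT instance $\mathcal F$ with $m$ clauses and $N=\Theta(m)$ variables, I would build an instance of {\sc short weak 2-linkage} with $n$ vertices and budgets $k_1,k_2=\Theta(\log^{1+\epsilon}n)$ that is a YES-instance if and only if $\mathcal F$ is satisfiable. The whole point of the construction is to tune the size/budget trade-off so that a \emph{polynomial} algorithm for such instances would decide $\mathcal F$ in time $n^{O(1)}=2^{o(m)}$, contradicting the stated form of ETH.

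The key idea is to make one unit of budget encode a whole block of variables rather than a single variable, so that the algorithm's search over $k$-tuples of non-$A_1$ arcs (Lemma~\ref{lemma:paths}) is exactly a search over assignments. Fix a block size $g$ (to be tuned) and partition the $N$ variables into $b=N/g=\Theta(m/g)$ blocks. For each block I would build a selection gadget that $P_1$ is structurally forced to traverse and whose internal routes are exactly the $2^{g}$ truth assignments of the block, each route costing one unit of deviation from the shortest-path arcs $A_1$; committing to an assignment of all blocks then consumes budget $k_1=\Theta(b)=\Theta(m/g)$, and setting $k_1=b$ leaves $P_1$ no slack to cheat. The clause side is handled as in the proof of Theorem~\ref{oneshort}: the path $P_2$ visits the $m$ clause vertices in order and, for each clause, must route through an arc associated with a literal that the selected block-assignment makes true, this arc being available precisely when it is not already used by $P_1$. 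The subdivision trick of Theorem~\ref{oneshort} keeps $P_1$ shortest off the clause gadgets and lets the clause detours of $P_2$ be shortest-path detours, so that $k_2$ can also be held at $\Theta(\log^{1+\epsilon}n)$.

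Each selection gadget contributes $2^{\Theta(g)}$ vertices, so $n=2^{\Theta(g)}$ up to polynomial factors, whence $\log n=\Theta(g)$. Choosing $g=m^{1/(2+\epsilon)}$ gives $k_1=\Theta(m/g)=\Theta(m^{(1+\epsilon)/(2+\epsilon)})=\Theta(\log^{1+\epsilon}n)$, and likewise $k_2$; moreover $n=2^{\Theta(m^{1/(2+\epsilon)})}$, so the reduction itself runs in time $2^{o(m)}$. Hence any polynomial-time algorithm for these instances would decide $\mathcal F$ in time $2^{o(m)}$, contradicting ETH, which proves the theorem.

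I expect the main obstacle to be the joint design of the selection and clause gadgets so that simultaneously: (i) $P_1$ is genuinely forced to pick exactly one of the $2^{g}$ routes per block at a cost of one budget unit; (ii) these routes correctly expose, through arc-(un)availability, which literals of the block are satisfied, so that clause satisfaction is faithfully tested by arc-disjointness with $P_2$; and (iii) both $P_1$ and $P_2$ stay within their small budgets, so that neither path can exploit its $\Theta(\log^{1+\epsilon}n)$ budget to bypass the intended routing in the NO-case. Making a single detour carry $\Theta(g)$ bits of information while keeping both budgets at $\Theta(\log^{1+\epsilon}n)$ is the delicate point; once the gadgets are in place, verifying the two directions of correctness and the exact calibration of $k_1,k_2$ against $\log n$ is routine.
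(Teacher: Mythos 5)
There is a genuine gap: the heart of your argument --- the block-selection gadgets --- is never constructed, and the properties you require of them are exactly the hard part, not a routine verification. You need a gadget of size $2^{O(g)}$ in which (a) $P_1$ selects one of $2^{g}$ routes at a cost of exactly one unit of deviation from $A_1$ (if all routes have equal length, the deviation is zero, so you need some asymmetry that must itself be prevented from being exploited); (b) for each of the $g$ variables in the block and each polarity, there is an arc that is occupied by precisely the $2^{g-1}$ routes falsifying that literal, so that arc-disjointness with $P_2$ tests clause satisfaction --- arranging $g$ such ``cuts'' simultaneously inside one gadget while keeping all routes vertex-bounded and equal in cost is a substantial open design problem, which you yourself flag as ``the delicate point.'' In addition, your accounting for $k_2$ does not obviously work: $P_2$ must make $m$ clause detours, and unless every single detour lies exactly on shortest-path arcs of $A_2$ (requiring equalization of all detour lengths and exclusion of shortcuts through the selection gadgets), the excess length of $P_2$ is $\Omega(m)$, which dwarfs the allowed budget $\Theta(\log^{1+\epsilon}n)=\Theta(m^{(1+\epsilon)/(2+\epsilon)})$. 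As written, the proposal is a research plan whose feasibility is uncertain, not a proof.

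The paper's actual argument shows that none of this machinery is needed. It starts from an arbitrary instance of the \emph{unrestricted} weak 2-linkage problem on $n$ vertices and simply pads it with an independent set so that the new instance has $N=2^{n^{1/(1+\epsilon')}}$ vertices; then $\log^{1+\epsilon'}N=n$, so every $(s_i,t_i)$-path automatically has length at most $d(s_i,t_i)+k_i$ and the length constraints are vacuous. A polynomial $O(N^c)$ algorithm would therefore solve weak 2-linkage in time $2^{c\cdot n^{1/(1+\epsilon')}}=2^{o(n)}$, which contradicts ETH via the linear-size reduction from 3-SAT to 2-linkage of Fortune, Hopcroft and Wyllie. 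If you want to keep a gadget-based route, you would in effect be proving a stronger statement (hardness even when the budgets are genuinely binding), but for the theorem as stated the padding argument is both sufficient and complete.
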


\pf We give the proof when  $k_1,k_2=\log^{1+\epsilon}{}N$, where $N$ is the number of vertices in the input digraph. Let $[D,s_1,s_2,t_1,t_2]$ be an instance of the weak 2-linkage problem and let $n$ be the number of vertices of $D$. 
Let $\epsilon'$ be defined such that $2^{n^{\frac{1}{1+\epsilon'}}} = \left\lceil 2^{n^{\frac{1}{1+\epsilon}}} \right\rceil$ and 
note that $\epsilon' \leq \epsilon $ and that $\epsilon' > 0$ when $n$ is large enough.
 Construct a new digraph $D'$ by adding an independent set of size
$2^{n^{\frac{1}{1+\epsilon'}}}-n$ so that the resulting digraph has $N=2^{n^{\frac{1}{1+\epsilon'}}}$ vertices, implying that we have 
$(\log{}N)^{1+\epsilon'}=n$. Clearly every pair of arc-disjoint $(s_1,t_1)$-, $(s_2,t_2)$-paths, $P_1,P_2$  in $D'$ use only vertices from $D$ and hence each of their lengths is at most $n = \log^{1+\epsilon'}{}N \leq \log^{1+\epsilon}{}N$ so $P_i$ is  at most
$k_i$ longer that the shortest $(s_i,t_i)$-path for $i=1,2$.

Suppose there is an algorithm for the {\sc weak short 2-linkage} problem that runs in time $O(N^c)$ for inputs on $N$ vertices when $k_1,k_2=\log^{1+\epsilon}{}N$ for some fixed constanct $c>0$. Then we have
\begin{eqnarray*}
  N^c  &=& (2^{n^{\frac{1}{1+\epsilon'}}})^c\\
       &=&  2^{c\cdot{}n^{\frac{1}{1+\epsilon'}}}\\
       &<&  2^{\delta\cdot{}n}\\    
\end{eqnarray*}

\noindent{}For every fixed constant $\delta{}>0$ provided that $n$ is large enough. This means that we can solve the general weak 2-linkage problem in time $O(2^{\delta\cdot{}n})$ for every $\delta>0$.

To see that this contradicts the ETH, we just have to observe that the reduction from 3-SAT to the 2-linkage problem in \cite{fortuneTCS10} (see also \cite[Section 10.2]{bang2009}) converts a 3-SAT formula with $n$ variables and $m$ clauses into an instance of 2-linkage with at most $dm$ vertices where $d$ is a constant (it is at most 61). Furthermore, the 2-linkage problem for a digraph on $n$ vertices reduces to the weak 2-linkage problem on a digraph with twice as many vertices. \qed

\section{Remarks and open problems}

Slivkins \cite{slivkinsSJDM24} proved that the {\sc weak $k$-linkage} problem is $W[1]$-hard for acyclic digraphs. We can prove that the same holds for {\sc short weak $k$-linkage} in acyclic digraphs. Indeed, consider an instance of {\sc weak $k$-linkage} on an acyclic digraph $D$ and consider a topological ordering $v_1,...,v_n$ of the vertices of $D$, \textit{i.e.} an ordering such that for every arc $v_iv_j$, we have $j>i$. Let us build $D'$ from $D$ by replacing every arc $v_iv_j$ in $D$ by a directed path of length $(j-i)$ in $D'$. Hence, $D'$ is still acyclic and every walk between a vertex $v_i$ and a vertex $v_j$ in $D$ is now replaced by a walk of length  $j-i$ in $D'$ and is thus a shortest walk. Therefore, a solution of short weak $k$-linkage in $D'$ immediately provides a solution of weak $k$-linkage in $D$.


\bibliography{refs}


\end{spacing}

\end{document}